\newcommand{\n}[1] {\mbox{\boldmath{$#1$}}}
\newtheorem{lem}{Lemma}
\newtheorem{thm}{Theorem}
\newcommand{\be}{\begin{eqnarray}}
\newcommand{\ee}{\end{eqnarray}}
\newcommand{\beq}[1]{\begin{equation}\label{#1}}
\newcommand{\eeq}{\end{equation}}
\newcommand{\ba}{\begin{eqnarray*}}
\newcommand{\ea}{\end{eqnarray*}}
\DeclareMathOperator{\rank}{rank}
\newcommand{\erre}{\mbox{I$\!$R}}
\begin{document}

\begin{frontmatter}

\title{Handling Factors in Variable Selection Problems}
\runtitle{Factors in Variable Selection}

\begin{aug}
  \author{\fnms{Gonzalo} \snm{Garcia-Donato}\ead[label=e1]{gonzalo.garciadonato@uclm.es}}
\and
  \author{\fnms{Rui} \snm{Paulo}\thanksref{t1}\ead[label=e2]{rui@iseg.ulisboa.pt}}

  \thankstext{t1}{Partially supported by the Project CEMAPRE - UID/MULTI/00491/2013 financed by FCT/MCTES through national funds.}
%
  \runauthor{Garcia-Donato and Paulo}

  \affiliation{Universidad de Castilla-La Mancha and Universidade de Lisboa}

  \address{Department of Economics and Finance\\ Universidad de Castilla-La Mancha\\
  Instituto de Desarrollo Regional\\ \printead{e1}}
 
  \address{CEMAPRE and Department of Mathematics\\Lisbon School of Economics Management\\Universidade de Lisboa\\
  \printead{e2}}

\end{aug}

\begin{abstract}
Factors are categorical variables, and the values which these variables assume are called levels. In this paper, we consider the variable selection problem where the set of potential predictors contains both factors and numerical variables. Formally, this problem is a particular case of the standard variable selection problem where factors are coded using dummy variables. As such, the Bayesian solution would be straightforward and, possibly because of this, the problem, despite its importance, has not received much attention in the literature. Nevertheless, we show that this perception is illusory and that in fact several inputs like the assignment of prior probabilities over the model space or the parameterization adopted for factors may have a large (and difficult to anticipate) impact on the results. We provide a solution to these issues that extends the proposals in the standard variable selection problem and does not depend on how the factors are coded using dummy variables. Our approach is illustrated with a real example concerning a childhood obesity study in Spain.
\end{abstract}

\end{frontmatter}
\section{Introduction}
Variable selection in the context of Gaussian regression models has always been a very important topic of research in Statistics, 
and in particular
within the Bayesian community. The issues addressed in the Bayesian literature include computational challenges that stem from high-dimensional problems, the specification of default priors for model-specific parameters and priors for the model space, but also the study of frequentist properties of the resulting methodologies, specifically questions related to consistency.  

Throughout most of this literature, and in particular in the papers dealing with prior specification, the explanatory variables are assumed to be numeric, therefore excluding categorical predictors. Categorical predictors are often called factors, and the different categories that they assume are referred to as levels. The use of dummy variables allows one to formally write the ensuing model as a linear model, so that in principle one could expect that the general recommendations would be readily applicable. In this paper, we investigate issues that arise when one wants to include factors in the list of possible predictors and chooses to follow an objective Bayes approach to variable selection. 


Bayesian variable selection ideas have been utilized with success in the presence of factors to develop experimental designs, both screening and follow-up. Recent examples include \cite{BinChi07} and \cite{ConDel16}. Typically the problem consists in finding adequate designs to ascertain which factors are active in explaining an outcome variable. The assumed model is Gaussian and linear, and the criterion is a function of posterior model probabilities and of a distance between the predictive densities under competing models. These papers are concerned with main effects and interactions, but consider only full-rank models. In the present article, we address the issue of including rank-deficient models and the role of reparametrizations. \cite{BinChi07} use independent priors for all regression coefficients, which require some tuning, whereas \cite{ConDel16} utilizes the priors in \cite{Baetal11}. In common, these papers have the careful specification of the prior in the model space, which is also an important part of our paper. To the best of our knowledge, the question of which levels of a particular factor are most relevant is never considered in the literature of experimental design, and this is a research question to which we provide an answer. 

An approach to the analysis of factorial experiments which is also Bayesian is that of \cite{NobGre00}. Here, finite mixtures are used to represent main effects and interactions. If the main effect of two factors use the same components of the mixture, then they have the same effect on the response. This introduces the notion of a partition of the set of main effects and interactions. By utilizing a reversible-jump Markov chain Monte Carlo algorithm, one is able to determine the most probable partition patterns of the main effects and interactions, which may be seen as an analog of variable selection in the sense that it answers, e.g., the question of which factors have the same main effect. Identifiability constraints are imposed to ensure that the entertained models are full-rank. The priors require considerable tuning and induce in a non-trivial way a prior distribution on the set of partitions, which plays a role similar to the prior on the model space in the approach that we recommend in this paper.

\cite{Rouetal12} and \cite{Rouetal17} specifically address the issue of constructing default Bayes factors for models that include factors as explanatory variables, namely ANOVA and factorial designs. Emphasis is placed on parametrizations and appropriate independent priors. The fact that our proposal does not rely on any reparametrization of the models is a distinguishing aspect of our work when compared to theirs. Any discussion concerning the prior on the model space is completely absent in these two papers.

\cite{ClyPar98} is also a relevant reference. They consider the problem of variable selection in the context of an application where continuous and categorical predictors are present. Again, independent priors are placed on regression coefficients, and a constant prior is placed on the model space, but the idea that a factor is relevant if at least one of its levels is present in the model is considered, as it is in our work.

\cite{Chi96} focuses on constructing prior distributions on the model space that incorporate certain types of relations between predictors, which includes the case of the dummy variables often used to incorporate factors in linear models. Our work and \cite{Chi96} differ in many aspects (later detailed), but \cite{Chi96} observed and reported a number of key concerns (e.g. multiplicity issues and the role of reparametrizations) that are also central in our paper and that we specifically address. In addition, we also tackle the problem of specifying prior distributions on the model-specific parameters and explore the value of posterior inclusion probabilities in this context.

Another approach to dealing with categorical predictors focuses on modeling and fitting techniques which explicitly consider the special nature of these predictors, with a particular emphasis on regularization-based methods. \cite{TutGer16} and \cite{PauWag17} are examples of such approaches, including the references therein. Of particular interest is the removal of a factor or the fusion of levels of a factor. These approaches cannot be viewed as model selection techniques, as they typically do not explicitly entertain the notion of competing models.

The rest of the paper is organized as follows: in Section~\ref{example} we describe the problem, clarify the language that we use and introduce an example that will be used throughout the paper. In Section~\ref{sec.vsrm} we review basic results of variable selection in regression models and lay the ground for the developments that will be described in the next section. Indeed, Section~\ref{sec.factors} is devoted to the question of how to handle factors in variable selection problems. For ease of presentation, the main ideas are first introduced in the one-factor case, and later extended to the general scenario. The paper concluded with a discussion. All proofs are relegated to the appendix.

\section{The problem, nomenclature and an illustrative real example}\label{example}
Before we proceed, let us fix the nomenclature that will be used in the sequel. We refer to any explanatory variable that we entertain as having an effect on certain response $y$ as \emph{predictor\/}. It will be assumed that $y$ is Gaussian, while the effect of the predictors is 
linear.
A numerical predictor will be simply called a \emph{variable\/}. In the variable selection problem we are interested in understanding which variables from a set $\{ x_1,\ldots, x_k \}$ are relevant to explain $y$, while there may be another set of variables $\{x_{01},\ldots,x_{0 k_0}\}$ which are known to affect $y$. We refer to these last ones as \emph{sure\/} variables, and we assume throughout the paper that the constant is a sure variable, i.e., that there is always an intercept in all the models that we consider. 

As we stated in the Introduction, we use the term \emph{factor\/} to refer to a categorical predictor. Our paper discusses and proposes methodology to deal with the model selection problem where, in addition to variables, one considers $p$ factors, $\{A_1,\ldots,A_p\}$, and is interested in ascertaining their role in explaining the response $y$. The number of levels that the factor $A_r$ can take is denoted as $\ell_r$. 

Throughout the paper, and mainly for illustrative purposes, we consider a real example that studies obesity in children. 
This study has been conducted in
Spain \citep{Zuetal11}, and as part of it the body mass index $y$ and several other relevant sources of information concerning children between 2 and 14 years of age were collected. Here, we will consider the predictors in Table~\ref{Obe.Tab}. Apart from the intercept, the weight and height at birth ($x_{01}$ and $x_{02}$, respectively), and the age of the child ($x_{03}$) will be treated as sure variables. Hence, we have a total of $k_0=4$ sure variables. As potential predictors we have a total of five. Among these, there are two variables (hours per day devoted to screens and number of hours devoted daily to sleeping, denoted $x_{1}$ and $x_{2}$, respectively) and two factors: sports activity ($A_1$) and healthiness of food ($A_2$). Factor $A_1$ has $\ell_1=6$ levels, while factor $A_2$ has $\ell_2=3$, ranging from less to more (\emph{a priori}) beneficial habits.

Of the data collected, we only use the set of children for which all predictors have been recorded (without missing values) resulting in a 
total of $1002$ observations.

\begin{table}[t!]
\begin{center}
{\small\scalebox{0.75}{
\begin{tabular}{r|r|c|c}
Predictors & Type & Key & Numer of levels ($\ell$)\\
\hline
Weight at birth &  Sure variable & $x_{01}$ & -\\
Height at birth &  Sure variable & $x_{02}$& -\\
Age & Sure variable & $x_{03}$& -\\
Hours per day devoted to screens (TV, ps3, etc)& Variable & $x_1$& - \\
Hours he/she sleeps & Variable & $x_2$ & - \\
Sports & Factor & $A_1$ & $\ell_1=6$\\
Healthy food & Factor & $A_2$ & $\ell_2=3$
\end{tabular}
}}
\end{center}
\caption{\small Description of the predictors for body mass index, $y$, considered in the obesity example. The constant is also considered a sure variable.}
\label{Obe.Tab}
\end{table}

\section{Variable Selection in Regression Models}\label{sec.vsrm}
The variable selection problem has received considerable attention from the Bayesian community. In this setting, and writing $\n y=(y_1, \ldots, y_n)$, the ensuing statistical model that contains all possible variables (usually called the full model) is
$$
\n y\mid \n \alpha,\n \beta,\sigma\sim N(\n X_0 \n \alpha+\n X \n \beta,\sigma^2 \n I_n)
$$
where $\n I_n$ is the order $n$ identity matrix, $\n \alpha$ and $\n \beta$ are the regression coefficients, and $\sigma^2$ is the variance of the error term. The matrix $\n X_0$ is obtained by collecting the values of the sure variables for each individual $i$ by rows, so that $\n X_0$ is $n\times k_0$ (recall that this matrix contains a vector of ones). Similarly $\n X$ is $n\times k$ and contains the values of all the entertained variables. Throughout this paper, we assume that we have more data points than predictors (i.e., that $n\geq k_0 + k + 1$); see \citet{Beretal16} for a treatment of the problem when such restriction is not met.

The variable selection problem can then be formulated as quantifying the evidence provided by the data in favor of each of the models
\begin{equation}\label{Mgamma}
M_{\boldsymbol \gamma}:\  \n y\mid \n \alpha,\n \beta,\sigma\sim N(\n X_0 \n \alpha+\n X_{{\boldsymbol \gamma}} \n \beta_{\boldsymbol \gamma},\sigma^2 \n I_n)\ ,
\end{equation}
where ${\boldsymbol \gamma}\in\{0,1\}^k$ indicates which of the $k$ variables are present in the model, $\n X_{\boldsymbol \gamma}$ results from selecting the corresponding columns in $\n X$, and similarly for $\n \beta_{\boldsymbol \gamma}$. Slightly abusing notation, the model with none of the variables (the so-called null model) corresponds to ${\boldsymbol \gamma}=\n 0$; the full model is obtained with ${\boldsymbol \gamma}=\n 1$. In the sequel, we denote by $k_{\boldsymbol \gamma}$ the number of variables included under model $M_{\boldsymbol \gamma}$, that is, $\n 1^T \boldsymbol \gamma$.

The Bayesian answer that we adhere to in this article is based on the posterior probabilities of each of the competing $2^k$ models,
\begin{equation}\label{postprob}
P(M_{\boldsymbol \gamma}\mid \n y)\propto m_{{\boldsymbol \gamma}}(\n y)\ P(M_{\boldsymbol \gamma})
\end{equation}
where $P(M_{\boldsymbol \gamma})$ represents the prior probability of model $M_{\boldsymbol \gamma}$, and $m_{\boldsymbol \gamma}(\n y)$ is the prior predictive density of the data under model $M_{\boldsymbol \gamma}$,
$$
m_{{\boldsymbol \gamma}}(\n y)=\int N(\n y\mid \n X_0 \n \alpha+\n X_{{\boldsymbol \gamma}}\n\beta_{\boldsymbol \gamma},\sigma^2 \n I_n)\ \pi_{\boldsymbol \gamma}(\n \alpha, \n \beta_{\boldsymbol \gamma}, \sigma)\ d\n \alpha\ d \n\beta_{\boldsymbol \gamma}\ d\sigma\ ,
$$
with $\pi_{\boldsymbol \gamma}(\n \alpha, \n \beta_{\boldsymbol \gamma}, \sigma)$ denoting the prior distribution on the model-specific parameters. Alternatively, we can rewrite \eqref{postprob} as
$$
P(M_{\boldsymbol \gamma}\mid \n y) = \frac{B_{\boldsymbol \gamma}\ P(M_{\boldsymbol \gamma})}{\sum_{\boldsymbol \gamma'} B_{\boldsymbol \gamma'}\ P(M_{\boldsymbol \gamma'})}
$$
where $B_{\boldsymbol \gamma}  = m_{\boldsymbol \gamma}(\n y)/m_{\boldsymbol 0}(\n y)$ is the so-called Bayes factor of model $M_{\boldsymbol \gamma}$ to the null model.

Standard objective variable selection choices for $P(M_{\boldsymbol \gamma})$ 
include the constant prior
\begin{equation}\label{eq.const}
P(M_{\boldsymbol \gamma})=1/2^{k}\ ,
\end{equation}
which is frequently utilized as it, at least apparently, is the natural non-informative choice. We much prefer the \cite{ScottBerger09} prior that automatically accounts for multiplicity and will be ultimately part of our proposal:
\begin{equation}\label{eq.SB}
P(M_{\boldsymbol \gamma})=\frac{1}{(k+1){k\choose k_{\boldsymbol \gamma}}}\ .
\end{equation}
This prior is a particular case of the more general beta-binomial prior, which has been used to incorporate prior knowledge about the true model size by e.g.\ \cite{LeySteel12}. Considerations about special characteristics of the underlying problem (e.g.\ collinearity issues) have lead to other interesting alternatives, including the dilution priors of \cite{George10} and the model space priors by \cite{Woetal15}.

The choice of $\pi_{\boldsymbol \gamma}(\n \alpha, \n \beta_{\boldsymbol \gamma}, \sigma)$ from an objective point of view has been an important research question since at least \cite{ZellSiow80}; see \citet{liang08} and \citet{Baetal11} for in-depth reviews. A substantial part of the literature has focused on priors that have the peculiarity of using a mixture of normal densities for $\n\beta_\gamma$, centered at zero and with a variance proportional to the information matrix, and standard non-informative priors for the common parameters $\n\alpha$ and $\sigma$:
\begin{align}
&\pi_{\boldsymbol 0}(\n \alpha, \sigma) =  \sigma^{-1} \label{prior0}\\
&\pi_{\boldsymbol \gamma}(\n \alpha, \n \beta_{\boldsymbol \gamma}, \sigma) = \sigma^{-1}\ \int_0^{+\infty} N(\beta_{\boldsymbol \gamma}\mid \n 0, g\ \sigma^2\ (\n V_{\boldsymbol \gamma} ^T\n V_{{\boldsymbol \gamma}})^{-1})\ h_{\boldsymbol \gamma}(g)\ d g\ , {\boldsymbol \gamma}\neq \n 0 \label{priorgamma}
\end{align}
where $\n V_{\boldsymbol \gamma}=(\n I_n-\n P_0) \n X_{\boldsymbol \gamma}$, $\n P_0=\n X_0 (\n X_0^T \n X_0)^{-1} \n X_0^T$.
This approach was named \emph{conventional} by \cite{BergerPericchi01} and \cite{BayGar07}, a term that we also adopt. Conventional priors have been successfully implemented by many authors, like \cite{FLS01} and \cite{liang08} and, more recently, \cite{Baetal11} have shown that this class of priors satisfies a number of 
desirable properties including several types of invariance, predictive matching and consistency \citep[see][for full details]{Baetal11}.

Conventional priors lead to Bayes factors that depend on readily available statistics, namely\begin{equation}\label{BF}
B_{\boldsymbol \gamma}=\mathcal{B}\left( \frac{\textrm{SSE}_{\boldsymbol \gamma}}{\textrm{SSE}_{\boldsymbol 0}},k_0,k_{\boldsymbol \gamma}+k_0\right)\,
\end{equation}
where $\textrm{SSE}_{\boldsymbol \gamma}$ and $\textrm{SSE}_{\boldsymbol 0}$ are the sum of squared errors under model $M_{\boldsymbol \gamma}$ and $M_{\boldsymbol 0}$, respectively and
\begin{equation}\label{bcal}
\mathcal{B}(q,\kappa_0,\kappa_1)=\int (1+q\,g)^{-(n-\kappa_0)/2}\,(1+g)^{(n-\kappa_1)/2}\, h_{\boldsymbol \gamma}(g)\ d g.
\end{equation}

With respect to $h_{\boldsymbol \gamma}$, our recommended choice is the robust prior in \cite{Baetal11} that corresponds to 
\begin{align}
h_{\boldsymbol \gamma}(g)= \frac{1}{2}\ \left(\frac{1+n}{k_{\boldsymbol \gamma}+k_0}\right)^{1/2}\ (g+1)^{-3/2}\ , \quad g> \frac{n+1}{k_{\boldsymbol \gamma}+k_0}-1
\end{align}
and leads to a Bayes factor that can be expressed in closed-form:
\begin{multline}
\mathcal{B}(q,\kappa_0,\kappa_1)=
\left(\frac{n+1}{\kappa_1}\right)^{(\kappa_1-\kappa_0)/2}\ 
\frac{q^{-(n-\kappa_0)/2}}{\kappa_1+1}\\
{}_2F_1\left[\frac{\kappa_1-\kappa_0+1}{2};\frac{n-\kappa_0}{2};
\frac{\kappa_1-\kappa_0+3}{2};
\frac{\kappa_1 (1-1/q)}{n+1}\right]
\end{multline}
where ${}_2F_1$ is the standard hypergeometric function \citep{AS64}. The numerical results presented in this paper are based on the robust prior but, quite importantly, the theoretical results equally apply to any prior in the class of conventional priors defined above. 

Relevant for the problem with factors is the underlying assumption in the conventional approach that the matrix $[\n X_0 \mid \n X_{\boldsymbol \gamma}]$ is of full column rank, hence guaranteeing the existence of the inverse matrix $(\n V_{\boldsymbol \gamma} ^T\n V_{{\boldsymbol \gamma}})^{-1}$. As we will see in the next section, this condition is usually not satisfied when we consider the inclusion of factors in the variable selection problem. 


\section{Factors in Variable Selection}\label{sec.factors}
The methodology that we propose is presented in Section~\ref{general} in the general setting. This proposal follows after a discussion that extends comments in \citet{Chi96} about different possibilities for handling factors in variable selection. These arise as a consequence of their special structure, formed by levels, that can be treated together or separately. 
The next two sections are devoted to this discussion, which is presented in the context of the one factor case for clarity of exposition.

\subsection{Initial considerations}
The frequentist textbook approach to handling factors, particularly when only one factor is present, is a two-step procedure: first, perform the $F$-test for the hypothesis that there is no difference between the groups defined by the levels of the factor. Next, if that hypothesis is rejected, the question of which are the groups that are different is addressed. Answering that question is usually dealt with via pairwise $t$-tests or by obtaining and interpreting confidence intervals for the individual level effects. Issues related to multiple comparisons emerge, and there are many possible corrections prescribed to compensate for that \citep[see][for example]{Hsu96}. 

A naive Bayes analog of the first step above would consist in comparing 
the posterior probability of the model without the factor (null model) with the posterior probability of the model that states that all levels of the factor are relevant to explain the response (full model). This approach, which addresses the question ``are all levels relevant?''\ is detailed in Section~\ref{Areall}, where we derive formulas for posterior probabilities based on conventional priors and that extend previous findings by \cite{BayGar07} in the underlying rank deficient problem. One particular inconvenient of this procedure is that if the number of levels is relatively large, the posterior probability of the full model will be highly penalized due to its complexity, hence potentially underestimating the importance of only a small number of levels of the factor explaining the response.

Further, what is by far less clear is how to perform the second step, i.e., how to identify which levels are important when the full model receives substantial evidence from the data. One may start questioning whether a more parsimonious model can be selected, one in which only some levels of the factor are included. This poses a coherence problem in terms of the prior probabilities of the models that we will be entertaining, which in a way is similar to the multiple comparison issues in the frequentist analysis. 
Another possibility, suggested by \cite{Chi96}, is to use the posterior distribution under the full model to decide which levels are important. This would implicitly obviate the model selection uncertainty (supposing the full model is certainly the true model) but could be argued to imply a double use of the data: selecting the model and then estimating the parameters.

Remarkably, the Bayesian paradigm allows us to address the problem from a perspective which is different from the two-step procedures outlined above. 
It follows by recognizing from the very beginning that the problem of interest is determining which levels of the factor are relevant to explain the response. This cannot be captured by a pair of models and requires a collection of models indexed by the active levels.
The question being addressed is hence ``is at least one level relevant?'' and it is considered in Section~\ref{Isatleast} in the one-factor case, and further extended to the general case in Section~\ref{general}. This, which is ultimately our recommended strategy, allows for treating multiplicity issues through the prior probabilities over the model space as is done in the variable selection scenario. Furthermore, we will argue that the inclusion probabilities of the levels (a summary of the posterior distribution on the model space) may be used to ascertain the importance of the levels. We highlight that this sub-product of our proposed approach removes the need for any second step and is obviously formally coherent. 

Although in principle this approach looks straightforward, we shall see in Section~\ref{Isatleast} that there are a number of difficulties associated with its correct implementation. These have mainly to do with the assignment of the probabilities over the model space and the role of parametrizations. To the best of our knowledge none of them have been formally treated before in the literature. 

\subsection{The one-factor case}\label{1f}
In this section, we consider the situation where we have a number of sure variables but are uncertain about whether a factor $A$ should also be used to explain the response. We suppose that $A$ has $\ell$ possible levels. For illustrative purposes, of the running example introduced in Section~\ref{example}, we use $\{1,x_{01},x_{02},x_{03}\}$ as sure variables and for $A$ we use the factor Sports. Recall that this factor has $\ell=6$ levels. 

%

\subsubsection{Are all levels relevant?}\label{Areall}
When considering the problem of whether all levels are relevant, there are only two models to entertain: the model that contains all levels of $A$ ($M_{\boldsymbol 1}$), and the one without the factor ($M_{\boldsymbol 0}$). 

In its original form, model $M_{\boldsymbol 1}$ can be expressed as 
\begin{equation}\label{ANCOVA1f}
M_{\boldsymbol 1}:\ y_{ij}=\n x_{0ij}^T\n\alpha+a_j+\varepsilon_{ij},\ j=1,\ldots,\ell,\ i=1,\ldots,n_j.
\end{equation}
where $y_{ij}$ is the value of the response of the $i$-th individual in level $j$ of the factor. For purposes that will be clear in the sequel, this model is presented in its natural formulation, although it's a rank-deficient parametrization. 
In \eqref{ANCOVA1f}, the vector $\n x_{0ij}$ (of dimension $k_0$) contains the values of the sure variables (including at least the constant) and $a_j$ stands for the effect of the $j$-th level of the factor. A more compact expression for $M_{\boldsymbol 1}$ is
\begin{equation}\label{ANCOVA1fcomp}
M_{\boldsymbol 1}:\ \n y\mid \n\alpha, \n a, \sigma \sim N(\n X_0\n\alpha+\n X\n a,\sigma^2\n I_n)\ ,
\end{equation}
where $n=\sum_{j=1}^\ell n_j$, $\n y^T=(y_{11},\ldots,y_{1n_1},\ldots,y_{\ell 1},\ldots,y_{\ell n_\ell})$, $\n X=\oplus_{j=1}^\ell\n 1_{n_j}$, with $\oplus$ standing for direct sum of matrices.

Also
$$
M_{\boldsymbol 0}:\ y_{ij}=\n x_{0ij}^T\n\alpha+\varepsilon_{ij},\ j=1,\ldots,\ell,\ i=1,\ldots,n_j\ ,
$$
which can be rewritten as
$$
M_{\boldsymbol 0}:\ \n y\mid \n\alpha, \sigma \sim N(\n X_0\n\alpha,\sigma^2\n I_n)\ .
$$

For the model parameters under $M_{\boldsymbol 1}$, the prior in \eqref{priorgamma} cannot be used since the matrix $[\n X_0 \mid \n X]$ is not of full rank (it has $k_0+\ell$ columns, but its rank is $k_0+\ell-1$ since $\n X_0$ contains a vector of ones), which in turn implies that the inverse of $\n V^T \n V$ in that formula does not exist. 

Testing problems in rank-deficient settings were studied by \cite{BayGar07}. Their main practical conclusion is that, when computing the Bayes factor in \eqref{BF}, the third argument of ${\cal B}$, which corresponds to the number of columns in $[\n X_0 \mid \n X]$ in the full-rank case, should be replaced by the rank of $[\n X_0\mid \n X]$, that is,
\begin{equation}\label{BFnaive}
B_{\boldsymbol 1}=\mathcal{B}\left(\frac{\textrm{SSE}_{\boldsymbol 1}}{\textrm{SSE}_{\boldsymbol 0}},k_0,k_0+\ell-1\right).
\end{equation}
The theory in \cite{BayGar07} is developed under a condition 
of testability which, unfortunately, does not hold in general in this setting (this is part of Theorem~\ref{giBF} below). Nevertheless, there is a quite solid reason to think that \eqref{BFnaive} is still the right way to compare $M_{\boldsymbol 0}$ to model $M_{\boldsymbol 1}$: since sums of squared errors are invariant with respect to model reparametrizations, if we perform any full rank reparametrization of $M_{\boldsymbol 1}$, and subsequently apply \eqref{BF} (now that we have a full rank model), we would end up with \eqref{BFnaive}.

Despite all evidence in favor of \eqref{BFnaive} to compare model $M_{\boldsymbol 1}$ against model $M_{\boldsymbol 0}$, its justification as a Bayesian solution is at this point yet to establish: does it correspond to an actual Bayes factor for the competing models, arising from valid priors? The need for such requirement, to which we superscribe, was established as a Principle in \cite{BergerPericchi01}. 

What we formalize in the next two theorems is a positive answer to this requirement. We show in Theorem~\ref{giBF} that \eqref{BFnaive} results from using the prior in \eqref{priorgamma} with 
$(\n V^T \n V)^{-1}$ replaced with any element of a certain class of matrices that, accordingly to Theorem~\ref{res}, are all (non-singular) generalized inverses of $\n V^T \n V$. A related use of this type of generalized inverses has been considered by \citet{Beretal16} in a different problem, namely variable selection in regression problems with more predictors than data points.

Theorem~\ref{giBF} is presented in a more general setting, as this will be useful in the sequel.

\begin{thm}\label{giBF} Consider
$$
M_0:\ \n y\mid \n\alpha, \sigma \sim N(\n X_0\n\alpha,\sigma^2\n I_n),
$$
and the rank-deficient model
$$
M_A:\ \n y\mid \n\alpha, \n a, \sigma \sim N(\n X_0\n\alpha+\n X\n a,\sigma^2\n I_n)\ ,
$$
where $\n X$ is $n\times \ell$, $\rank(\n X_0)=k_0$, $\rank[\n X_0 \mid \n X]=k_0+r$ and $k_0<k_0+r<k_0+\ell$. Then, the hypothesis $H_0:\ \n a=\n 0$ which determines model $M_0$ from model $M_A$ will not, in general, be testable. 

Nevertheless, let the prior under $M_0$ be $\pi_0(\n\alpha,\sigma)=\sigma^{-1}$ and under $M_A$ be
\begin{equation}\label{eq.priorG}
\pi_A(\n\alpha,\n a,\sigma)=\sigma^{-1}\ \int  N(\n a\mid\n 0,g\sigma^2\n S)\, h_A(g)\, dg,
\end{equation}
where, with $\n V=(\n I-\n P_0)\n X$, 
\begin{equation}\label{eq.S}
\n S=(\n V^T\n V+\n T)^{-1}
\end{equation}
and $\n T$ is any symmetric semi-positive definite matrix of dimension $\ell \times \ell$ and rank $\ell-r$ such that $\n S$ exists. Then, the Bayes factor of $M_A$ to $M_0$ is given by
$$
{\cal B}_A\left(\frac{\textrm{SSE}_A}{\textrm{SSE}_0}, k_0, k_0+r\right),
$$
where ${\cal B}_A$ is the integral in \eqref{bcal} with $h_\gamma$ replaced by $h_A$. Note that $B_A$ does not depend on the particular choice of $\n T$.
\end{thm}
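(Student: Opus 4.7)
The plan is to split the statement into two parts: the non-testability, which is a direct consequence of the rank deficiency, and the Bayes factor formula, which I would reduce to the standard full-rank conventional case via an orthogonal reparametrization aligned with the spectrum of $\mathbf{V}^T\mathbf{V}$. Non-testability follows because the rank condition $\rank[\mathbf{X}_0\mid\mathbf{X}] = k_0+r < k_0+\ell$ produces a nonzero $(\boldsymbol\alpha_*,\mathbf{a}_*)$ with $\mathbf{a}_*\neq\mathbf{0}$ and $\mathbf{X}_0\boldsymbol\alpha_* + \mathbf{X}\mathbf{a}_* = \mathbf{0}$. Hence, for any $\boldsymbol\alpha$, the sampling distribution under $M_0$ at $\boldsymbol\alpha$ coincides with that under $M_A$ at $(\boldsymbol\alpha-\boldsymbol\alpha_*, -\mathbf{a}_*)$. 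The constraint $\mathbf{a}=\mathbf{0}$ is therefore not the unique way of recovering $M_0$ inside $M_A$, and the hypothesis is not testable in general.

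For the Bayes factor, I would diagonalize $\mathbf{V}^T\mathbf{V} = \mathbf{U}\mathbf{D}\mathbf{U}^T$ with $\mathbf{U}=[\mathbf{U}_1\mid\mathbf{U}_2]$ separating the $r$ strictly positive eigenvalues from the $\ell-r$ zero ones, and use the orthogonal change of variable $\mathbf{a} = \mathbf{U}_1\mathbf{b}_1 + \mathbf{U}_2\mathbf{b}_2$. Since $\mathbf{V}\mathbf{U}_2 = \mathbf{0}$, the columns of $\mathbf{X}\mathbf{U}_2$ lie in the column space of $\mathbf{X}_0$, so $\mathbf{X}\mathbf{U}_2 = \mathbf{X}_0\mathbf{C}$ for some $\mathbf{C}$, and the regression mean becomes $\mathbf{X}_0\tilde{\boldsymbol\alpha} + \tilde{\mathbf{X}}\mathbf{b}_1$ with $\tilde{\boldsymbol\alpha}=\boldsymbol\alpha+\mathbf{C}\mathbf{b}_2$ and $\tilde{\mathbf{X}}=\mathbf{X}\mathbf{U}_1$. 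The new design $[\mathbf{X}_0\mid\tilde{\mathbf{X}}]$ has full column rank $k_0+r$, the likelihood no longer depends on $\mathbf{b}_2$, and the flat prior on $\boldsymbol\alpha$ induces (via a unit-Jacobian change from $(\boldsymbol\alpha,\mathbf{b}_2)$ to $(\tilde{\boldsymbol\alpha},\mathbf{b}_2)$) a flat prior on $\tilde{\boldsymbol\alpha}$.

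The crux, which I expect to be the main obstacle to state cleanly, is showing that integrating $\mathbf{b}_2$ out of the joint Gaussian prior on $\mathbf{b}$ yields $\mathbf{b}_1\mid g,\sigma \sim N(\mathbf{0},\, g\sigma^2(\tilde{\mathbf{V}}^T\tilde{\mathbf{V}})^{-1})$ with $\tilde{\mathbf{V}}=(\mathbf{I}-\mathbf{P}_0)\tilde{\mathbf{X}}=\mathbf{V}\mathbf{U}_1$, independently of $\mathbf{T}$. Writing $\mathbf{W}=\mathbf{U}^T\mathbf{T}\mathbf{U}$ in $2\times 2$ blocks, one has $\mathrm{Cov}(\mathbf{b}) = g\sigma^2(\mathbf{D}+\mathbf{W})^{-1}$, so block inversion gives $\mathrm{Cov}(\mathbf{b}_1) = g\sigma^2(\mathbf{D}_r + \mathbf{W}_{11} - \mathbf{W}_{12}\mathbf{W}_{22}^{-1}\mathbf{W}_{21})^{-1}$. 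Two small facts make this collapse: first, $\mathbf{W}_{22}$ is invertible, because any $\mathbf{u}\in\mathrm{range}(\mathbf{U}_2)=\mathrm{null}(\mathbf{V}^T\mathbf{V})$ with $\mathbf{T}\mathbf{u}=\mathbf{0}$ would also kill $\mathbf{V}^T\mathbf{V}+\mathbf{T}$, contradicting its invertibility; second, for a psd $\mathbf{W}$ the equality $\rank(\mathbf{W})=\rank(\mathbf{W}_{22})=\ell-r$ forces the Schur complement to vanish (decompose $\mathbf{W}=\mathbf{L}^T\mathbf{L}$ and observe that the columns of $\mathbf{L}_1$ are linear combinations of those of $\mathbf{L}_2$). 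Putting them together, $\mathrm{Cov}(\mathbf{b}_1)=g\sigma^2\mathbf{D}_r^{-1}=g\sigma^2(\tilde{\mathbf{V}}^T\tilde{\mathbf{V}})^{-1}$, so the dependence on $\mathbf{T}$ has been fully integrated away.

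At this point the reduced problem on $(\tilde{\boldsymbol\alpha},\mathbf{b}_1,\sigma)$ with a full-rank design $[\mathbf{X}_0\mid\tilde{\mathbf{X}}]$ of $k_0+r$ columns, a flat prior on $(\tilde{\boldsymbol\alpha},\sigma)$ and the conventional prior on $\mathbf{b}_1$ falls squarely inside the standard setup, so formula \eqref{BF} applies and returns the Bayes factor $\mathcal{B}_A(\textrm{SSE}_A/\textrm{SSE}_0, k_0, k_0+r)$. Because sums of squared errors are invariant under reparametrization, $\textrm{SSE}_A$ computed in the reduced model agrees with the one in the original rank-deficient $M_A$, and the result follows with no residual dependence on $\mathbf{T}$.
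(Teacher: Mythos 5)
Your proposal is correct, but it removes the dependence on $\n T$ by a different mechanism than the paper. The paper also starts from the spectral decomposition of $\n V^T\n V$, but then takes a full-rank factorization $\n T=\n C^T\n C$ and changes variables to $(\n a_1,\n a_2)=(\n Q_1^T\n a,\n C\n a)$, under which the prior quadratic form splits exactly into $\n a_1^T\n D\n a_1+\n a_2^T\n a_2$; the proof then integrates $\n a_2$ out and closes with a chain of determinant identities showing that the normalizing constants (including the non-trivial Jacobian of that non-orthogonal change) cancel to one. You instead keep the purely orthogonal change $\n a=\n U_1\n b_1+\n U_2\n b_2$, absorb $\n X\n U_2\n b_2$ into the flat $\n\alpha$, and handle the resulting coupling of $\n b_1$ and $\n b_2$ in the prior by Gaussian marginalization: block inversion plus the two linear-algebra facts that $\n W_{22}=\n U_2^T\n T\n U_2$ is invertible (else $\n V^T\n V+\n T$ would be singular, using positive semi-definiteness of $\n T$) and that $\rank(\n W)=\rank(\n W_{22})$ forces the Schur complement to vanish, so that $\n b_1\sim N(\n 0,g\sigma^2\n D_r^{-1})$ independently of $\n T$. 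Both routes land in the standard full-rank conventional setup with design $[\n X_0\mid\n X\n U_1]$ of rank $k_0+r$ and invoke SSE invariance; yours has the advantage of avoiding Jacobian and determinant bookkeeping and of making transparent \emph{why} the Bayes factor cannot depend on $\n T$ (the induced prior on the estimable directions is the conventional one no matter what $\n T$ is), while the paper's factorization $\n T=\n C^T\n C$ yields an exact product-form prior and a self-contained verification that the constant equals one. A further difference: for non-testability the paper merely exhibits a counterexample (which suffices for the ``not, in general, testable'' claim), whereas your null-space argument — any nonzero null vector of $[\n X_0\mid\n X]$ must have $\n a_*\neq\n 0$ because $\n X_0$ has full column rank, which contradicts estimability of $\n a$ — proves non-testability under the stated rank conditions in full generality; just phrase it through estimability (the rows of $[\n 0\mid\n I_\ell]$ are not in the row space of the design) rather than only through the non-uniqueness of the representation of $M_0$, to match the formal definition of a testable hypothesis.
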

\begin{proof}
  See Appendix~\ref{pgiBF}.
\end{proof}

Hence, the proposed Bayes factor results from a proper prior on $\n a$, in agreement with the mentioned Principle in \cite{BergerPericchi01} and further with the first criterion (called Basic) in \cite{Baetal11}. Additionally, the matrix \eqref{eq.S} is a generalized inverse of $\n V^T\n V$, as we state in the following result. Both observations confirm the proposed prior as a generalization of \eqref{priorgamma} to the setting of rank-deficient models.

\begin{thm}\label{res}
The matrix (\ref{eq.S}) is a generalized non-singular inverse of $\n V^T\n V$.
\end{thm}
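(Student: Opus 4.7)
Write $\n W = \n V^T\n V$ for brevity. To show that $\n S$ is a generalized inverse of $\n W$ I only need to verify the defining identity $\n W\,\n S\,\n W = \n W$; non-singularity of $\n S$ is automatic, since by hypothesis $\n S$ is the inverse of the non-singular matrix $\n V^T\n V + \n T$. Multiplying $\n S(\n W+\n T) = \n I$ on the left by $\n W$ gives $\n W = \n W\n S\n W + \n W\n S\n T$, so the problem reduces to showing $\n W\n S\n T = \n 0$, equivalently $\n W\n S\n W\n v = \n W\n v$ for every $\n v$.

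\textbf{Two direct-sum decompositions.} The matrix $\n W$ is symmetric positive semidefinite of rank $r$ and, by hypothesis, $\n T$ is symmetric positive semidefinite of rank $\ell-r$. Since for any symmetric positive semidefinite $\n A$ one has $\n v^T\n A\n v = 0$ if and only if $\n A\n v = \n 0$, the identity $\n v^T(\n W+\n T)\n v = \n v^T\n W\n v + \n v^T\n T\n v$ yields $\mathrm{Null}(\n W+\n T) = \mathrm{Null}(\n W)\cap\mathrm{Null}(\n T)$. Non-singularity of $\n W+\n T$ therefore forces this intersection to be $\{\n 0\}$. Using $\mathrm{Range}(\n A) = \mathrm{Null}(\n A)^\perp$ for symmetric $\n A$ and taking orthogonal complements gives $\mathrm{Range}(\n W) + \mathrm{Range}(\n T) = \erre^\ell$, and since the dimensions $r$ and $\ell - r$ sum to $\ell$, both
\[
\mathrm{Null}(\n W)\oplus\mathrm{Null}(\n T) \;=\; \erre^\ell \;=\; \mathrm{Range}(\n W)\oplus\mathrm{Range}(\n T)
\]
are internal direct sums.

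\textbf{Main calculation and principal obstacle.} Fix an arbitrary $\n v\in\erre^\ell$ and set $\n y = \n S\n W\n v$, so $(\n W+\n T)\n y = \n W\n v$. By the null-space direct sum I can decompose $\n y = \n y_W + \n y_T$ with $\n y_W \in \mathrm{Null}(\n W)$ and $\n y_T \in \mathrm{Null}(\n T)$. Substitution gives $\n W\n y_T + \n T\n y_W = \n W\n v$, equivalently $\n T\n y_W = \n W(\n v - \n y_T)$. The left-hand side lies in $\mathrm{Range}(\n T)$ and the right in $\mathrm{Range}(\n W)$; the range direct sum then forces both sides to be $\n 0$. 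In particular $\n y_W \in \mathrm{Null}(\n W)\cap\mathrm{Null}(\n T) = \{\n 0\}$, whence $\n y = \n y_T \in \mathrm{Null}(\n T)$, and so $\n W\n S\n W\n v = \n W\n y = (\n W+\n T)\n y = \n W\n v$. As $\n v$ was arbitrary, $\n W\n S\n W = \n W$, completing the verification. The only genuinely non-obvious step is extracting the two direct-sum decompositions from the rank hypothesis on $\n T$ together with the non-singularity of $\n W+\n T$; this is where positive semidefiniteness is used in an essential way, and once both decompositions are in hand the rest of the argument is mechanical.
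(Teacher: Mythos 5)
Your proof is correct. The paper disposes of the statement much more quickly, but only by citation: it invokes Theorem 18.2.5 of Harville (1997), which says that for symmetric non-negative definite matrices whose column spaces are essentially disjoint, the (ordinary) inverse of the sum is a generalized inverse of each summand; the paper's only work is the dimension count $\ell = r + (\ell-r) - d$ showing $\dim\bigl({\cal C}(\n V^T\n V)\cap{\cal C}(\n T)\bigr)=d=0$. You prove the same structural fact differently --- via the positive semidefinite identity $\mathrm{Null}(\n W+\n T)=\mathrm{Null}(\n W)\cap\mathrm{Null}(\n T)=\{\n 0\}$ and orthogonal complements, rather than a direct column-space dimension count --- and then, instead of citing a reference, you derive $\n W\n S\n W=\n W$ from scratch by splitting $\n y=\n S\n W\n v$ along $\mathrm{Null}(\n W)\oplus\mathrm{Null}(\n T)$ and using $\mathrm{Range}(\n W)\oplus\mathrm{Range}(\n T)=\erre^\ell$ to kill the cross terms. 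In effect you have re-proved the relevant special case of Harville's theorem, so your argument is longer but self-contained and elementary, which is a reasonable trade-off; the paper's version buys brevity at the cost of an external matrix-algebra reference. One small housekeeping remark: you assert that $\n W=\n V^T\n V$ has rank $r$; that is indeed available, but it rests on Lemma~\ref{lemR} of the paper ($\rank(\n V)=r$ under the hypotheses of Theorem~\ref{giBF}), and it would be cleaner to say so explicitly rather than state it as given.
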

\begin{proof}
  See Appendix~\ref{pres}.
\end{proof}

At this point, Theorem~\ref{giBF} can be seen as a minor formal technicality. Nevertheless, it makes it possible to handle directly the original rank-deficient parametrization, a usage that will turn out crucial in the next section, establishing \eqref{BFnaive} the basis for our proposed conventional solution to handle factors. 

The fact that we use generalized inverses as a vehicle to work directly with \eqref{ANCOVA1f} makes our approach different from the proposal in \cite{Rouetal12} and \cite{Rouetal17}, where a full rank parametrization is \emph{chosen} and independent normal priors are used for the new coefficients of the resulting models. Regardless of how sensible is the chosen parametrization (which we think is sensible) this choice has an impact. This observation will also be revisited in the next section. 

In our running example, we would obtain a Bayes factor of $M_{\boldsymbol 1}$ to $M_{\boldsymbol 0}$ of $B_{\boldsymbol 1}=4$, which implies substantial support to $M_{\boldsymbol 1}$ (all levels are needed) in detriment of $M_{\boldsymbol 0}$.

\subsubsection{Is at least one level relevant?}\label{Isatleast}
The notion that a factor explains $y$ when at least one of its level is relevant cannot be described by a single model and requires a collection of models: one per each of the possible combination of levels. In particular, and making use of the notation in \eqref{Mgamma}, this hypothesis holds true if and only if any of the models
$$
M_{\boldsymbol \gamma}:\ \n y\mid \n\alpha,\n a_{\boldsymbol \gamma}, \sigma \sim N( \n X_0\n\alpha+\n X_{\boldsymbol \gamma}\n a_{\boldsymbol \gamma},\sigma^2\n I_n) 
$$
is the true model, where ${\boldsymbol \gamma}\in\{0,1\}^\ell\backslash \{\n 0\}$. Here, as in Section~\ref{sec.vsrm}, matrix $\n X_{\boldsymbol \gamma}$ represents the submatrix of $\n X$ in \eqref{ANCOVA1fcomp} that results from selecting the columns that correspond to ones in ${\boldsymbol \gamma}$, and hence there are $2^\ell-1$ such models. This of course reminds us of  the standard variable selection problem, but here the probability that the factor $A$ is relevant should be obtained as
\begin{equation}\label{probF}
P(A\mid\n y)=\sum_{{\boldsymbol \gamma}\in\{0,1\}^\ell:\ {\boldsymbol \gamma}\ne \boldsymbol{0}}\, P(M_{\boldsymbol \gamma}\mid \n y)=1-P(M_{\boldsymbol 0}\mid \n y)\,
\end{equation}
that can be written as a function of the Bayes factor of $M_{\boldsymbol \gamma}$ to $M_{\boldsymbol 0}$ and of the prior probabilities of the models, for which we recommend the conventional priors \eqref{priorgamma} with $(\n V^T\n V)^{-1}$ replaced by the regular generalized inverse \eqref{eq.S} when $M_{\boldsymbol \gamma}$ is rank deficient [there is only one model which is rank deficient ($M_{\boldsymbol 1}$) and the rest are all full rank]. Hence, in practical terms, when computing the Bayes factors, $B_{\boldsymbol 1}$ will have the expression \eqref{BFnaive} while all the other $B_{\boldsymbol \gamma}$ follow the standard expression 
$\mathcal{B}(\textrm{SSE}_{\boldsymbol \gamma}/\textrm{SSE}_{\boldsymbol 0},k_0,k_0+k_{\boldsymbol \gamma})$, 
where $k_{\boldsymbol \gamma}$ is the number of levels that are active in $M_{\boldsymbol \gamma}$, that is, $\n 1^T {\boldsymbol \gamma}$. Notice that the models with $k_{\boldsymbol \gamma}=\ell-1$ are all full rank parametrizations of $M_{\boldsymbol 1}$, so that the Bayes factor for these coincides with $B_{\boldsymbol 1}$.

\paragraph{Prior probabilities} One very important ingredient in \eqref{probF} are the prior model probabilities $P(M_{\boldsymbol \gamma})$.

In principle, we could use any of the standard choices like \eqref{eq.const} or \eqref{eq.SB} with $k$ replaced by $\ell$ \citep[the constant prior \eqref{eq.SB} was used by][in their application]{{ClyPar98}}. Any of these possibilities ignores the common nature of the $M_{\boldsymbol \gamma}$ models and have also the undesirable property of apportioning the probabilities in a way that strongly depends on the number of levels of $A$ (which in many cases has associated a certain degree of arbitrariness). In the case of the constant prior, this dependency has a large impact on the prior probabilities, particularly visible in reducing the probability of $M_{\boldsymbol 0}$ as $\ell$ increases. Hence, for instance, if $\ell=4$ then $P(M_{\boldsymbol 0})=1/16$ while if $\ell=6$ then $P(M_{\boldsymbol 0})=1/64$. The case of the Scott-Berger prior is less dramatic, but still has a non-negligible effect: $P(M_{\boldsymbol 0})=1/5$ for $\ell=4$ and $P(M_{\boldsymbol 0})=1/7$ for $\ell=6$.

A different possibility, which is the one that we recommend, is to recognize the hierarchical nature of the testing problem and first elicit $P(M_{\boldsymbol 0})=P(A)=1/2$, and then use one of the expressions \eqref{eq.const} or \eqref{eq.SB} to determine the conditionals $P(M_{\boldsymbol \gamma}\mid A)$. Our preferred option is the \cite{ScottBerger09} prior in \eqref{eq.SB} because it automatically controls for the multiplicity issue that arises due to the $\ell$ dummy variables used. Remarkably, this potential pitfall was observed by \cite{Chi96} 
leading to the recommendation of handling factors in blocks, in a strategy similar to that seen in the previous section. Our proposed prior is then:
\begin{equation}\label{SB}
P(M_{\boldsymbol \gamma}\mid A)=\frac{1}{\ell {\ell \choose k_{\boldsymbol \gamma}}}\ .
\end{equation}
Obviously, with this hierarchical approach the probability of $M_{\boldsymbol 0}$ remains unchanged with the number of levels. Moreover, this proposal is in agreement with the ``effect hierarchy'' principle  (\ \cite{BinChi07}; \cite{ConDel16}), as models with the same number of active levels  will have the same prior probability, and the higher the number of active levels, the smaller is the prior probability of a model.

In our example, we have computed the posterior probability of the factor being relevant using \eqref{probF}, and it resulted in $P(A\mid\n y)=0.997$. This implies very strong evidence supporting the conjecture that sports activity explains body mass index. This was obtained using our preferred prior on the model space: $P(M_{\boldsymbol 0})=1-P(A)=1/2$, followed by \eqref{SB}. It is possible to obtain the inclusion probabilities of the levels of the factors, and its role in understanding the effect of each of the levels is considered in the general setting in Section~\ref{incprob}.

\paragraph{The role of reparametrization} In our setting, the model that nests all the competing models (the full model) is not full rank. Nevertheless, and remarkably, the statistical analysis is based on its original formulation. Reparametrizations takes place exclusively for mathematical reasons --- namely achieving a full rank representation of said model --- and that is simply not needed here.

Under the setup of Section~\ref{Areall}, we noticed that this wasn't really relevant, as any full rank expression of the model would give rise to the same Bayes factor.
This could give us the illusory perception that the way the model is initially parametrized does not have any impact on the results when we approach the question computing $P(A\mid\n y)$ as in here, but (quite surprisingly) this turns out to be wrong. In our example, if we adopt the hierarchical Scott and Berger prior (similar results are obtained with the other priors) and we parametrize \eqref{ANCOVA1f} using the first level as the baseline, then we obtain $P(A\mid\n y)=0.560$. If, on the other hand, we use the second level as the baseline, we obtain $P(A\mid\n y)=0.998$. This happens because in this type of reparametrization the effect of the baseline level is included in the null. Since level 1 is quite important, $P(A\mid \n y)$ is more or less large depending on whether level 1 is chosen as the baseline. 

This is very unsatisfactory, and should come as a warning: if one chooses to reparametrize \eqref{ANCOVA1f} to obtain a full rank model, $P(A\mid\n y)$ --- i.e., the posterior probability that any of the levels of the factor is  relevant in explaining the response --- will depend on the parametrization chosen. This is even more worrisome as the choice of reparametrization is in many occasions arbitrary (even made by the statistical software used) and the practitioner will not be in general aware of its consequences.

One can come up with ways of parametrizing that are more satisfactory \citep[e.g. the one in][seems to us very reasonable]{Rouetal12}, but we should be aware that results are dependent on that choice (and we can imagine many sensible ways of reparametrizing). 

This is one of the main strengths of our methodology: it works directly with the natural formulation of the model, the rank deficient specification, and hence it does not depend on any kind of full rank reparametrization.

\paragraph{The two-levels case} The unanimous way of handling factors with only two levels is by coding them using a 0-1 variable. Such approach implicitly implies choosing one of the levels (that coded with a zero) as the baseline. In this situation, the two models being entertained are (supposing that the first level is the baseline)
\begin{align*}
M_0^\star:\ y_{ij}&=\n x_{0ij}^T\n\alpha+\varepsilon_{ij}\\
M_1^\star:\ y_{i1}&=\n x_{0i1}^T\n\alpha+\varepsilon_{i1}\ y_{i2}=\n x_{0i2}^T\n\alpha+\delta+\varepsilon_{i2}\ .
\end{align*}
The effect of the factor is measured by $P(M_1^\star\mid\n y)$, and it is easy to see that, in this case, this does not depend on the choice of the baseline. The obvious objective specification for the prior probabilities is $P(M_0^\star)=1/2$.

Our approach is in principle different, as we propose handling the original rank-deficient formulation in \eqref{ANCOVA1f}. Here, four models are entertained: 
\begin{align*}
M_{\boldsymbol 0}:\ y_{ij}&=\n x_{0ij}^T\n\alpha+\varepsilon_{ij}\\
M_{(0,1)}:\ y_{i1}&=\n x_{0i1}^T\n\alpha+\varepsilon_{i1},\,\,y_{i2}=\n x_{0i2}^T\n\alpha+a_2+\varepsilon_{i2}\\
M_{(1,0)}:\ y_{i1}&=\n x_{0i1}^T\n\alpha+a_1+\varepsilon_{i1},\,\,y_{i2}=\n x_{0i2}^T\n\alpha+\varepsilon_{i2}\\
M_{\boldsymbol 1}:\ y_{i1}&=\n x_{0i1}^T\n\alpha+a_1+\varepsilon_{i1},\,\,y_{i2}=\n x_{0i2}^T\n\alpha+a_2+\varepsilon_{i2}\ .
\end{align*}
The probability of the factor being relevant in explaining the response is
$$
P(A\mid\n y)=P(M_{(0,1)}\mid\n y)+P(M_{(1,0)}\mid\n y)+P(M_{\boldsymbol 1}\mid\n y)\ .
$$
It can be easily checked that ${\cal B}_{(0,1)}={\cal B}_{(1,0)}={\cal B}_{\boldsymbol 1}$ which at the same time is equal to the Bayes factor of $M_1^\star$ to $M_0^\star$ above. Hence, if the prior probabilities are assigned hierarchically and $P(M_{\boldsymbol 0})=1/2$, then $P(A\mid\n y)=P(M_1^\star\mid\n y)$ agreeing with intuition. We take this as added support for the hierarchical specification of the prior probabilities over the model space.
There is no similar coincidence when the number of levels in the factor is greater than 2.

\subsection{The general case}\label{general}
In the general case, we have $p$ factors (factor $A_r$ has $\ell_r$ levels, each with a coefficient $a_{rj}$), $k$ variables and $k_0$ sure variables. The full model is hence
\newcommand{\snj}{\text{\bfseries\itshape j}} 
\begin{multline}\label{ANCOVAgf}
M_{\boldsymbol 1}:\ y_{i\snj}=\n x_{0i\snj}^T\n\alpha+\n x_{i\snj}^T\n\beta+
a_{1j_1}+a_{2j_2}+\cdots+a_{pj_p}+\varepsilon_{i\text{\bfseries\itshape j}}\, ,\\
j_r=1,\ldots,\ell_r,\, i=1,\ldots,n_\snj\ , r=1, \ldots, p,
\end{multline}
with $\n j$ representing the vector of indexes $(j_1, \ldots, j_p)$. The sample size is then $n=\sum_\snj n_\snj$ and we assume that there is at least one observation per group, i.e., $n_\snj\ge 1$ for all $\n j$. In matrix notation, $M_{\boldsymbol 1}$ can be expressed as
\begin{equation}\label{ANCOVAgfcomp}
M_{\boldsymbol 1}:\ \n y\mid \n\alpha, \n \beta, \n a, \sigma^2 \sim N(\n y\mid \n X_0\n\alpha+\n X\n\beta+\n Z\n a, \sigma^2\n I_n),
\end{equation}
where $\n X_0$ is $n\times k_0$; $\n X$ is $n\times k$ and $\n Z$ is $n\times L$, where $L=\sum_{r=1}^p\, \ell_r$. The design matrix in $M_{\boldsymbol 1}$ is then $[\n X_0 \mid \n X \mid \n Z]$ and is rank-deficient since its rank is $k_0+k+L-p$.

As before, the null model is
$$
M_{\boldsymbol 0}:\ y_{i\snj}=\n x_{0i\snj}^T \n\alpha+\varepsilon_{i\text{\bfseries\itshape j}},\,\, j_r=1,\ldots,\ell_r,\,i=1,\ldots,n_\snj\ .
$$


In this general setting, there are a total of $2^{k+L}-1$ possible models $M_{\boldsymbol \gamma}$ that are nested in $M_{\boldsymbol 1}$ and contain the sure variables. In our example, we have $k=2$, $L=6+3=9$ so that there are a total of $2^9=512$ competing models.
The posterior probability of any of these models is proportional to $B_{\boldsymbol \gamma}\, P(M_{\boldsymbol \gamma})$, and, as argued in the previous section, no matter if $M_{\boldsymbol \gamma}$ is full rank or rank-deficient, $B_{\boldsymbol \gamma}$ should be obtained as
\begin{equation}
B_\gamma=\mathcal{B}\left(\frac{\textrm{SSE}_{\boldsymbol \gamma}}{\textrm{SSE}_{\boldsymbol 0}},k_0,r_{\boldsymbol \gamma}\right),
\end{equation}
where $r_{\boldsymbol \gamma}$ is the rank of the design matrix in $M_{\boldsymbol \gamma}$. 

What we have to discuss now is the prior on the model space. In this general case, the number of models increases very fast with either $p$ or any $\ell_r$, therefore amplifying the effect of the standard choices of priors over e.g. the probability of the null that we observed in the case with only one factor.

Extending the previous reasoning, our proposal is that prior probabilities must be assigned hierarchically. Initially, the probability that a certain number, say, $m_1+m_2$, of variables and factors in $\{x_1, \ldots, x_k, A_1, \ldots, A_p\}$ are relevant to explain the response is established, and then, conditionally on this, the probability of individual models in this category is specified:
\begin{align}
&P(\{x_{i_1},\ldots,x_{i_{m_1}},A_{j_1},\ldots,A_{j_{m_2}}\})\label{p1} \\ 
&P(M_{\boldsymbol \gamma}\mid \{x_{i_1},\ldots,x_{i_{m_1}},A_{j_1},\ldots,A_{j_{m_2}}\})\ . \label{p2}
\end{align}
In \eqref{p2}, we are assuming that $\n\gamma$ is compatible with the given configuration of predictors; otherwise, that probability is zero.

There are various possibilities to determine these probabilities, but as already argued  our preferred option for both stages is to use the \cite{ScottBerger09} prior, as this choice controls for multiplicity both in the usual way (over the total number of predictors) but also over the number of levels of the factors. Straightforward combinatorics arguments lead to the following expressions:
\begin{align}
&P(\{x_{i_1},\ldots,x_{i_{m_1}},A_{j_1},\ldots,A_{j_{m_2}}\})=\left[(k+p+1){k+p\choose m_1+m_2}\right]^{-1}\label{p3}\\ 
&P(M_{\boldsymbol \gamma}\mid \{x_{i_1},\ldots,x_{i_{m_1}},A_{j_1},\ldots,A_{j_{m_2}}\})=\left[\prod_{h=1}^{m_2}\, \ell_h {\ell_h\choose k_\gamma^h}\right]^{-1}\ ,\label{p4}
\end{align}
where, in \eqref{p4}, $m_2\ge 1$ (otherwise, it is equal to one), and $1\le k_\gamma^h\le \ell_h$ is the number of levels of factor $A_h$ active in $M_{\boldsymbol \gamma}$.

Since \eqref{p3} only depends on the number of predictors, it's clear that models with the same number of predictors will be assigned the same marginal probability. Additionally, it's easy to verify that the marginal probability of a factor $A_r$ (i.e. the sum of the prior probabilities of the $2^{k+L-\ell_r}(2^{\ell_r}-1)$ models that contain at least one of the levels of factor $A_r$) is 1/2, and this (along with \eqref{p4}) shows that the present strategy extends the reasoning of Section~\ref{Isatleast}. Similarly, the marginal probability that each of the variables is included \textit{a priori} is also 1/2. This is again all in agreement with the ``effect hierarchy'' principle of \cite{BinChi07} \cite[c.f.\ also][]{ConDel16}.

Having obtained the posterior model probabilities, it is just a question of how to summarize them to be able to provide measures of the evidence that any of the factors is relevant in explaining the response. One obvious possibility is the analogous to the posterior inclusion probabilities in the standard variable selection problem. To obtain the inclusion probability of a factor $A_r$ it suffices to sum the posterior probabilities of all the models that contain at least one of the levels of factor $A_r$. The inclusion probabilities for the variables can be obtained as usual. 

Regarding the obesity example, in Table~\ref{incprobfactors} we have collected the posterior inclusion probabilities of all factors and variables. The conclusion is straightforward and states that both of the factors and $x_1$ are very relevant in explaining the body mass index while the evidence about the role of $x_2$ (hours of sleep) is not conclusive.

\begin{table}[t!]
\begin{center}
{\small\scalebox{0.75}{
\begin{tabular}{cccc}
$A_1$ & $A_2$  & $x_1$  & $x_2$\\
\hline
0.995 & 0.998 &  0.999 &  0.622\\
\hline
\end{tabular}
}}
\end{center}
\caption{\small Inclusion probabilities of factors and variables. \label{incprobfactors}}
\end{table}

\subsection{The inclusion probabilities of the levels}\label{incprob}
The approach we have introduced allows us to measure the importance of the individual levels of each of the factors by examining the associated inclusion probabilities, i..e, the sum of the posterior probabilities of all the models in which that level appears, and this is a distinctive feature of the methodology here proposed.  

While the inclusion probabilities cannot tell us the direction of the effect of the levels nor its magnitude (as any other product of a model selection exercise) they can be used to ascertain which levels are relevant and which are not, hence implicitly suggesting which categories of the levels can be included in the overall mean.

We have computed the inclusion probabilities for the levels of the factors $A_1$ and $A_2$ for the childhood obesity example. These are collected in Table~\ref{inclproblev}. Our interpretation of these results is as follows. The fact that the inclusion probability of level 1 of the factor $A_1$ is high, and the other ones low, means that, averaging out the effect of all other predictors, a child in level 1 has a mean body mass index which is deemed as significantly different from the overall mean. On the other hand, a child in the other levels will have a mean body mass index which is identical to the overall mean. 
When we look at the posterior inclusion probabilities of the levels of factor $A_2$, these are all relatively large, which makes their interpretation not as straightforward as before. What we could say is that all levels are relevant  in the sense that the mean body mass index of a child in any of the three levels will be significantly different from the overall mean. What is not clear is, for instance, whether the effect of level 2 is similar to that of level 1.

\begin{table}[t!]
\begin{center}
{\small\scalebox{0.75}{
\begin{tabular}{cccccc|ccc}
\multicolumn{6}{c}{$A_1$} & \multicolumn{3}{c}{$A_2$}\\
1 & 2 & 3 & 4 & 5 & 6 & 1 & 2 & 3\\
0.99 & 0.08 & 0.25 & 0.09 & 0.14 & 0.09 & 0.82 & 0.76 & 0.78\\
\hline
\end{tabular}
}}
\end{center}
\caption{\small Inclusion probabilities of levels of factors. \label{inclproblev}}
\end{table}

\section{Discussion}
In Bayesian variable selection problems, considering factors in the list of potential predictors 
creates certain peculiarities that need to be carefully addressed. These have to do with the choice of a full rank representation of the underlying model, and the prior distributions over the model space and on the model-specific parameters.

We have developed methodology that handles all these issues generalizing the use of conventional priors \citep[a class that satisfies a number of optimal properties as seen in][]{Baetal11} and the \cite{ScottBerger09} prior that controls for multiplicity. The end result is a fully automatic procedure for variable selection, requiring no tuning from the user when it comes to the priors used, but also no need to decide on any type of full rank parametrization. An interesting subproduct of our proposal are the inclusion probabilities of the levels of a factor. We have argued that these contain valuable evidence to ascertain the individual contribution of the levels hence eliminating the need of any ulterior analysis. 

We have not considered interaction terms and have not made any distinction between ordered and unordered factors. The former limitation clearly requires more research and will be pursued elsewhere. When it comes to the latter, it is not clear to us how an objective approach to variable selection in this context can take this information into account.

\appendix
\section{Proofs}
%

\subsection{Proof of Theorem ~\ref{giBF}}\label{pgiBF}
First, we state the following preliminary result. 

\begin{lem}\label{lemR}
Let $\n X_0$ be a $n\times k_0$ matrix, and $\n X$ is $n\times k$. If $\rank(\n X_0)=k_0$ and $\rank[\n X_0\mid \n X]=r+k_0$ then $\rank((\n I-\n P_0)\n X)=r$.
\end{lem}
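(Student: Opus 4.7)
The plan is to reduce the claim to the elementary rank identity
\[
\rank[\n X_0 \mid \n X]=\rank(\n X_0)+\rank\bigl((\n I-\n P_0)\n X\bigr),
\]
from which $\rank((\n I-\n P_0)\n X)=r$ is immediate once we substitute the hypotheses $\rank(\n X_0)=k_0$ and $\rank[\n X_0\mid\n X]=k_0+r$. So the whole task is to establish this rank identity for the two-block matrix.

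First I would write $\n X=\n P_0\n X+(\n I-\n P_0)\n X$ and observe that the columns of $\n P_0\n X$ lie in the column space of $\n X_0$. Hence subtracting suitable linear combinations of the columns of $\n X_0$ from the columns of the second block of $[\n X_0\mid\n X]$ — an elementary column operation that preserves rank — yields
\[
\rank[\n X_0\mid\n X]=\rank\bigl[\n X_0\mid(\n I-\n P_0)\n X\bigr].
\]

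Next I would exploit orthogonality. Because $\n P_0$ is the orthogonal projection onto $\operatorname{col}(\n X_0)$, every column of $(\n I-\n P_0)\n X$ lies in $\operatorname{col}(\n X_0)^{\perp}$. Therefore $\operatorname{col}(\n X_0)$ and $\operatorname{col}((\n I-\n P_0)\n X)$ intersect trivially, and the column space of the concatenated matrix is the direct sum of the two column spaces. This gives
\[
\rank\bigl[\n X_0\mid(\n I-\n P_0)\n X\bigr]=\rank(\n X_0)+\rank\bigl((\n I-\n P_0)\n X\bigr),
\]
which combined with the previous display establishes the identity and hence the lemma.

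I do not expect a real obstacle here; the only thing to be careful about is justifying that the column operation in the first step is valid (i.e., that every column of $\n P_0\n X$ is indeed in $\operatorname{col}(\n X_0)$, which follows at once from the definition of $\n P_0$) and that the direct-sum step uses nothing more than orthogonality. A briefer, equivalent route would be to apply rank–nullity to the linear map $\n I-\n P_0$ restricted to $\operatorname{col}(\n X_0)+\operatorname{col}(\n X)$, whose kernel is exactly $\operatorname{col}(\n X_0)$; this delivers the same count $(k_0+r)-k_0=r$ without explicitly invoking the block identity.
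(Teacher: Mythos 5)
Your proof is correct, but it follows a different route from the paper's. You establish the block rank identity $\rank[\n X_0\mid \n X]=\rank(\n X_0)+\rank((\n I-\n P_0)\n X)$ directly: first replacing the second block by $(\n I-\n P_0)\n X$ via rank-preserving column operations (legitimate since the columns of $\n P_0\n X$ lie in ${\cal C}(\n X_0)$), then using orthogonality of ${\cal C}((\n I-\n P_0)\n X)$ to ${\cal C}(\n X_0)$ to split the column space as a direct sum; your rank--nullity variant (kernel of $\n I-\n P_0$ restricted to ${\cal C}(\n X_0)+{\cal C}(\n X)$ equals ${\cal C}(\n X_0)$) is the same count in different clothing. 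The paper instead extracts a submatrix $\n X_c$ of $r$ columns of $\n X$ outside ${\cal C}(\n X_0)$, argues ${\cal C}(\n X_c)\cap{\cal N}(\n I-\n P_0)=\{\n 0\}$, and invokes Theorem 17.5.4 of Harville to get $\rank((\n I-\n P_0)\n X_c)=\rank(\n X_c)=r$. What your argument buys is self-containedness (no external theorem) and the fact that it operates on the whole matrix $\n X$ at once, thereby sidestepping the column-selection step, which as phrased in the paper is slightly delicate (the ``remaining'' columns of $\n X$ need not individually belong to ${\cal C}(\n X_0)$, only to ${\cal C}[\n X_0\mid\n X_c]$); the paper's version, in turn, isolates the general principle that premultiplication by a matrix whose null space is disjoint from the column space preserves rank, which is reusable elsewhere. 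The only hypothesis you should flag explicitly is that $\rank(\n X_0)=k_0$ is what makes $\n P_0=\n X_0(\n X_0^T\n X_0)^{-1}\n X_0^T$ well defined as the orthogonal projector onto ${\cal C}(\n X_0)$, which both steps of your argument use.
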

\begin{proof}
Here we adopt the notation in \cite{Harv:1997}: $\cal{C}(\cdot)$ denotes the space spanned by the columns of a matrix while $\cal{N}(\cdot)$ stands for the null space of a matrix.

With the assumptions in the result, $\n X$ can be expressed as $[\n X_c \mid \n X_d]$ where the first $r$ columns (defining $\n X_c$) do not belong to the space ${\cal C}(\n X_0)$ and the remaining $k-k_0$ do. Hence $\rank((\n I-\n P_0)\n X)=\rank((\n I-\n P_0)\n X_c)$. Now, notice that ${\cal C}(\n X_c)$ and ${\cal N}(\n I-\n P_0)$ are disjoint since ${\cal N}(\n I-\n P_0)\subset {\cal C}(\n P_0)={\cal C}(\n X_0)$. Then, and due to Theorem 17.5.4 in \cite{Harv:1997}, $\rank((\n I-\n P_0)\n X_c)=\rank(\n X_c)=r$ which completes the proof.
\end{proof}

And now we prove Theorem~\ref{giBF}.
\begin{proof}

In the context of a linear model $\n y = \n Z \n \eta+\n \varepsilon$, with $E[\n y \mid \n Z] = \n Z \n\eta$, an hypothesis $H:\n L\n \eta = \n 0$ is said to be testable if $\n L \n \eta$ is estimable, that is, if $\n L = \n B\n Z$ for some $\n B$, or equivalently if $\n L$ is the row space of $\n Z$. Consider the case where we have a single factor with 2 levels, and two observations have been taken at each of the levels, so that $n=4$, $\ell=3$. Additionally, assume that $X_0=\n 1$. Hence, $\n Z= [\n 1_4 | \oplus_{i=1}^2 \n 1_2]$, with $\n 1_p$ representing a $p$-dimensional vector of ones, so that $k_0=r=1$. Any 1-dimensional estimable function of $\n \eta=(\alpha, a_1, a_2)^T$ must be written in the form
$$
\n L \n \eta = (L_1+L_2)\alpha + L_1 a_1 +L_2 a_2
$$
for any values of $L_1$, $L_2$ and $L_3$. As a consequence, neither $a_1$ nor $a_2$ are estimable, and hence $H_0:\ a_1=a_2=0$ is not testable.

We have that the Bayes factor is $\int (m_A(\n y\mid g)/m_0(\n y))h_A(g)\, dg$, where
\begin{eqnarray*}
m_A(\n y\mid g)&=&\int \sigma^{-1}\ N(\n y\mid \n X_0\n\alpha+\n X\n a,\sigma^2\n I_n)\ N(\n a\mid\n 0,g\sigma^2\n S)\
d\sigma\, d\n a\, d\n\alpha\\
&=& \int \sigma^{-1}\ N(\n y \mid\n X_0\n\alpha+\n V\n a,\sigma^2\n I_n)\ N(\n a\mid\n 0,g\sigma^2\n S)\
d\sigma\, d\n a\, d\n\alpha\, ,
\end{eqnarray*}
where the equality holds since the change of variable has a unit Jacobian. Now $\rank(\n V)=\rank(\n X^T(\n I-\n P_0)\n X)=\rank((\n I-\n P_0)\n X)=r$ by the result in Lemma~\ref{lemR}. Hence, the spectral decomposition of $\n V^T\n V$ is
$$
\n Q^T\n V^T\n V\n Q=\left(
\begin{array}{cc}
\n D & \n 0\\
\n 0 & \n 0
\end{array}
\right),
$$
where $\n D$ is diagonal, has dimension $r\times r$ and has positive entries and $\n Q$ is orthogonal. Now consider $\n Q$ partitioned as $\n Q=[\n Q_1\mid\n Q_2]$ where $\n Q_1$ is $k\times r$. Note that $\n V\n Q_2=\n 0$ and $\n V=\n V\n Q_1\n Q_1^T=\n L\n Q_1^T$ where $\n L=\n V\n Q_1$. 

Since $\n T$ is symmetric of rank $\ell-r$ it can be factorized (use a full rank factorization) as $\n T=\n C^T\n C$ where $\n C$ is $(\ell-r)\times \ell$ and has rank $\ell-r$. Notice the following equivalence of determinants
\begin{eqnarray*}
|\n S|^{-1}&=&|\n V^T\n V+\n C^T\n C|=|\n V^T(\n I-\n P_0)\n V+\n C^T\n C|=\\
&=&
\Big|
\Big(
\begin{array}{c}
(\n I-\n P_0)\n V \\ \n C
\end{array}
\Big)^T
\Big(
\begin{array}{c}
(\n I-\n P_0)\n V \\ \n C
\end{array}
\Big)
\Big|=\\
&=&
\Big|
\Big(
\begin{array}{c}
(\n I-\n P_0)\n L\n Q_1^T \\ \n C
\end{array}
\Big)^T
\Big(
\begin{array}{c}
(\n I-\n P_0)\n L\n Q_1^T \\ \n C
\end{array}
\Big)
\Big|=\\
&=&
\Big|
\Big[
\Big(
\begin{array}{cc}
(\n I-\n P_0)\n L & \n 0\\
\n 0 &\n I
\end{array}
\Big)
\Big(
\begin{array}{c}
\n Q_1^T \\ \n C
\end{array}
\Big)
\Big]^T
\Big[
\Big(
\begin{array}{cc}
(\n I-\n P_0)\n L & \n 0\\
\n 0 & \n I
\end{array}
\Big)
\Big(
\begin{array}{c}
\n Q_1^T \\ \n C
\end{array}
\Big)
\Big]
\Big|
=\\
&=&
\Big|
\Big(
\begin{array}{c}
\n Q_1^T \\ \n C
\end{array}
\Big)^T
\Big(
\begin{array}{cc}
\n L^T(\n I-\n P_0)\n L & \n 0\\
\n 0 &\n I
\end{array}
\Big)
\Big(
\begin{array}{c}
\n Q_1^T \\ \n C
\end{array}
\Big)
\Big|=\\
&=&\Big|
\Big(
\begin{array}{c}
\n Q_1^T \\ \n C
\end{array}
\Big)
\Big|^2\,\,
|\n L^T(\n I-\n P_0)\n L|=
\Big|
\Big(
\begin{array}{c}
\n Q_1^T \\ \n C
\end{array}
\Big)
\Big|^2\,\,
|\n Q_1^T\n V^T\n V\n Q_1|=
\Big|
\Big(
\begin{array}{c}
\n Q_1^T \\ \n C
\end{array}
\Big)
\Big|^2\,\,
|\n D|
\end{eqnarray*}

In particular, the above shows that the squared matrix 
$
\big(
\begin{array}{c}
\n Q_1^T \\ \n C
\end{array}
\big)
$
is non-singular.

Now
\begin{eqnarray*}
m_1(\n y\mid g)&=&\int \sigma^{-1}\ N(\n y\mid\n X_0\n\alpha+\n V\n a,\sigma^2\n I_n)\ \big(\sigma\sqrt{2\pi g}\big)^{-\ell}|\n S|^{-1/2}\,\times\\
&&\times  \exp\left\{\frac{1}{2\sigma^2 g}[\n a^T\n V^T(\n I-\n P_0)\n V\n a+\n a^T\n C^T\n C\n a]\right\}\
d\sigma\, d\n a\, d\n\alpha\\
&=&\int \sigma^{-1}\ N(\n y\mid \n X_0\n\alpha+\n L\n Q_1^T\n a,\sigma^2\n I_n)\ \big(\sigma\sqrt{2\pi g}\big)^{-\ell}|\n S|^{-1/2}\,\times\\
&&\times \exp\left\{\frac{1}{2\sigma^2 g}[\n a^T\n Q_1\n L^T(\n I-\n P_0)\n L\n Q_1^T\n a+\n a^T\n C^T\n C\n a]\right\}
d\sigma\, d\n a\, d\n \alpha.
\end{eqnarray*}
In the integral above, make the change of variables $\n a_1=\n Q_1^T\n a$ and $\n a_2=\n C\n a$ (note that $\n a_1\in\erre^r$ and $\n a_2\in\erre^{\ell-r}$) with associated Jacobian 
$
{\cal J}=
\big|
\big(
\begin{array}{c}
\n Q_1^T \\ \n C
\end{array}
\big)
\big|^{-1}
$
to obtain
\begin{eqnarray*}
m_A(\n y\mid  g)&=&\int \sigma^{-1}\ N(\n y\mid \n X_0\n\alpha+\n L\n a_1,\sigma^2\n I_n)\big(\sigma\sqrt{2\pi g}\big)^{-r}|\n S|^{-1/2}\,\times\\
&&\times\,  \big(\sigma\sqrt{2\pi g}\big)^{-(\ell-r)}
\exp\left\{-\frac{1}{2\sigma^2 g}(\n a_1^T\n L^T(\n I-\n P_0)\n L \n a_1\right\}\, \times \\
&& \times\, \exp\left\{-\frac{1}{2\sigma^2 g}\n a_2^T\n a_2\right\}
{\cal J}
d\sigma\, d\n a_1\, d\n a_2\, d\n\alpha\ .
\end{eqnarray*}
Now integrate out $\n a_2$ to obtain
\begin{eqnarray*}
m_A(\n y\mid g)&=& \int \sigma^{-1}\ N(\n y \mid \n X_0\n\alpha+\n L\n a_1,\sigma^2\n I_n)\ \big(\sigma\sqrt{2\pi g}\big)^{-r}|\n S|^{-1/2}\,\times\\
&&\times 
\exp\left\{-\frac{1}{2\sigma^2 g}(\n a_1^T\n L^T(\n I-\n P_0)\n L \n a_1\right\}
{\cal J}
d\sigma\, d\n a_1\, d\n\alpha\\
&=&
|\n S|^{-1/2}{\cal J}\; \left|\n L^T(\n I-\n P_0)\n L\right|^{-1/2}\times\\
&& \times\int \sigma^{-1}\ N(\n y\mid \n X_0\n\alpha+\n L\n a_1,\sigma^2\n I_n)\,\times\\ 
&&\qquad\qquad\times\, N(\n a_1\mid\n 0,g\sigma^2(\n L^T(\n I-\n P_0)\n L)^{-1})\ d\sigma\, d\n a_1\, d\n\alpha\\
&=&|\n S|^{-1/2}{\cal J}\, \left|\n L^T(\n I-\n P_0)\n L\right|^{-1/2}\times\\ 
&&m_0(\n y)\, \left(1+g\, \frac{\textrm{SSE}_A}{\textrm{SSE}_0}\right)^{-(n-k_0)/2}(1+g)^{(n-r-k_0)/2}
\ .
\end{eqnarray*}
The last equality is a basic one in conventional theory and can be found for example in \cite{BayGar07}. Finally, to complete the proof it suffices to show that 
$$
|\n S|^{-1/2}{\cal J}\; \left|\n L^T(\n I-\n P_0)\n L\right|^{-1/2}=1\ ,
$$
but this can be easily obtained from the equalities between determinants deduced above.

\end{proof}

\subsection{Proof of Theorem~\ref{res}}\label{pres}
\begin{proof}
From Theorem 18.2.5, page 421 in \cite{Harv:1997}, it suffices to show that, in the conditions of Theorem~\ref{giBF}, ${\cal C}(\n V^T\n V)$ and ${\cal C}(\n T)$ are essentially disjoint. Suppose these are not, and $\dim({\cal C}(\n V^T\n V) \cap {\cal C}(\n T))=d>0$. In this case:
\begin{eqnarray*}
\ell &=&\dim({\cal C}(\n V^T\n V) + {\cal C}(\n T))=\\
&=&\dim({\cal C}(\n V^T\n V)) + \dim({\cal C}(\n T))-\dim({\cal C}(\n V^T\n V) \cap {\cal C}(\n T))=\\
&=&r+(\ell-r)-d<\ell\, ,
\end{eqnarray*}
which proves that $d=0$. 
\end{proof}

\bibliography{./mibibliografia,./Anabel}

\bibliographystyle{imsart-nameyear}

\end{document}